\numberwithin{equation}{section}
\newcommand{\ud}{\mathrm{d}}
\newcommand{\ui}{\mathrm{i}}
\newcommand{\ue}{\mathrm{e}}
\newcommand{\vb}{\boldsymbol{b}}
\newcommand{\vn}{\boldsymbol{n}}
\newcommand{\vx}{\boldsymbol{x}}
\newcommand{\vy}{\boldsymbol{y}}
\newcommand{\vz}{\boldsymbol{z}}
\newcommand{\vA}{\boldsymbol{A}}
\newcommand{\valpha}{\boldsymbol{\alpha}}
\newcommand{\gz}{{\mathbb Z}}
\newcommand{\rz}{{\mathbb R}}
\newcommand{\nz}{{\mathbb N}}
\newcommand{\R}{{\mathbb R}}
\DeclareMathOperator{\tr}{Tr}
\DeclareMathOperator{\mtr}{tr}
\numberwithin{equation}{section}
\newtheorem{theorem}{Theorem}[section]
\newtheorem{lemma}[theorem]{Lemma}
\newtheorem{cor}[theorem]{Corollary}
\theoremstyle{definition}
\newtheorem{defn}[theorem]{Definition}
\begin{document}

\thispagestyle{empty}

\hfill \phantom{\tt version \today}

\vspace*{4ex}

\noindent
{\large\bf Heat kernel asymptotics for magnetic Schrödinger operators}\\[3ex]
{\bf Jens Bolte${}^a$ and Stefan Keppeler}${}^b$\\[3ex]
${}^a${\small{Department of Mathematics, Royal Holloway, University of London\\
\phantom{${}^a$}Egham, TW20 0EX, United Kingdom\\
\phantom{${}^a$}{\tt jens.bolte@rhul.ac.uk}}}\\[3ex]
${}^b${\small{Mathematisches Institut, Universität Tübingen,\\
\phantom{${}^b$}Auf der Morgenstelle 10, 72076 Tübingen, Germany\\
\phantom{${}^b$}{\tt stefan.keppeler@uni-tuebingen.de}}}\\[3ex]
\parbox{.85\textwidth}{
{\bf Abstract.} We explicitly construct parametrices for magnetic
Schrödinger operators on $\R^d$ and prove that they provide a complete
small-$t$ expansion for the corresponding heat kernel, both on and off
the diagonal.}

\newpage

\section{Introduction}
\label{sec:Einl}
Heat kernel asymptotics have attracted much attention ever since
Minakshisundaram and Pleijel \cite{MinPle49,Min53} proved that the 
heat kernel for the Laplacian on a compact Riemannian manifold has 
a complete asymptotic expansion as $t\to 0^+$. These asymptotic 
expansions have since been extended to many elliptic operators of 
geometric relevance (see, e.g., 
\cite{BerGauEdm71,BerGetVer92,Gil95,Kir01}). Most notably, heat
kernel expansions reveal the local nature of heat invariants, i.e.,
the coefficients in the small-$t$ asymptotics of the trace of the heat
semi-group generated by the operator in question. Among other
things, this observation led to novel proofs of index theorems for 
elliptic complexes \cite{AtiBotPat73}. In general, heat kernel 
expansions turned out to provide powerful tools in spectral geometry. 

One would expect similar properties of heat kernels for other
(semi-bounded, self-adjoint) operators, too. Schr\"odinger operators
are among the most prominent examples. In particular, magnetic
Schr\"odinger operators are of interest in the context of para- and
diamagnetism where heat kernel estimates, including their small-$t$
behaviour, were used very successfully (see, e.g.,
\cite{LosTha97,Erd97}).

In a scattering context, where a Schr\"odinger operator has a
non-empty absolutely continuous spectrum and, therefore, the heat
semi-group is not of trace class, a `relative heat trace', as it
appears in a Krein trace formula, has been shown to posses a related
small-$t$ expansion (see \cite{Col81} for the non-magnetic case and
\cite{Hit02} for magnetic Schr\"odinger operators). Moreover,
irrespective of whether the heat semi-group of a Schr\"odinger
operator is of trace class, small-$t$ asymptotics of the diagonal of
the heat kernel have been proven (in \cite{HitPol03} for non-magnetic
operators and in \cite{KorPus03} for the magnetic case). In that
context the heat invariants are integrals of the scalar potential and 
the magnetic field strength as well as their derivatives.

In this paper we consider magnetic Schr\"odinger operators on $\rz^d$
with smooth and polynomially bounded scalar and vector
potentials. Our principal goal is to prove complete asymptotic
expansions as $t\to 0^+$ of the related full heat kernels (and not
only of their diagonals as, e.g., in \cite{KorPus03}). In order to
achieve this we construct parametrices for the heat equations and show
that these provide heat kernel asympotics. We then use the
parametrices to determine heat invariants for magnetic Schr\"odinger
operators on some manifolds with and without boundary of the form
$\rz^2/\Gamma$, where $\Gamma$ is a discrete group of reflections and 
translations. The paper is organised as follows. We
first briefly describe the setting of heat kernels and parametrices
for magnetic Schr\"odinger operators in Section \ref{sec:prelim}. The
parametrices are then constructed in Section \ref{sec:Para} and the
main result is summarised in Theorem \ref{prop:Para}. In Section
\ref{sec:Volt} we use the parametrices to construct a Volterra series
and prove that this series yields the heat kernel. Morever, this
series turns out to provide an asymptotic expansion for small
$t$. These results are contained in Theorem
\ref{thm:Volterra}. Finally, in Section \ref{sec:appl} we discuss some
applications to heat trace invariants for magnetic Schr\"odinger
operators on half-planes, infinite cylinders, and tori.

\section{Preliminaries}
\label{sec:prelim}
We consider magnetic Schr\"odinger operators
\begin{equation}
\label{magSchroe}
 H = \bigl( -\ui\nabla - \vA(\vx) \bigr)^2 + V(\vx) \ ,
\end{equation}
acting on a suitable domain in $L^2(\rz^d)$ with $d\geq 2$. For our purposes
we require the components $A_j$ of the vector potential $\vA$ and the 
potential $V$ to be in 
\begin{equation}
\label{polybound}
 \mathcal{S}_m := 
 \bigl\{ f \in C^\infty(\rz^d) ; \ \forall \alpha \in \nz_0^d \ 
         \exists\, C_\alpha > 0 \ s.t.\ 
         |\partial_x^\alpha f(\vx)| \leq C_\alpha (1+|\vx|)^m\bigr\}
\end{equation}
for some fixed $m\geq 0$. Moreover, $V$ shall be bounded from below. Under
these conditions $H$ is essentially self-adjoint on the domain 
$C_0^\infty(\rz^d)$, which is indeed true under much weaker conditions
(see, e.g., \cite{Sim73,LeiSim81}).
The vector potential defines a one-form
\begin{equation}
 A(\vx) = \sum_{k=1}^d A_k (\vx)\,\ud x_k\ ,
\end{equation}
whose exterior derivative
\begin{equation}
 B(\vx) := \ud A(\vx) = \sum_{k<l} B_{kl}(\vx)\,\ud x_l\wedge\ud x_k\ ,
\end{equation}
is the two-form of magnetic field strengths, with components
\begin{equation}
\label{eq:Bcompdef}
 B_{kl}(\vx) = \frac{\partial A_k}{\partial x_l}(\vx) -
 \frac{\partial A_l}{\partial x_k}(\vx)
\end{equation}
that are also in $\mathcal{S}_m$.

The heat kernel for $H$ is a function 
$K\in C^\infty\bigl((0,\infty)\times\rz^d\times\rz^d\bigr)$ satisfying
\begin{equation}
\label{eq:heat_kernel_def}
 \left( \frac{\partial}{\partial t} + H \right) K(t,\vx,\vy) = 0 
 \ , \qquad \lim_{t\to0+} K(t,\vx,\vy) = \delta (\vx-\vy)\ .
\end{equation}
The existence of the heat kernel is known and follows, e.g., from a
representation in terms of a Feynman-Kac formula (see, e.g., \cite{Sim05}). 

Often one is interested in the small-$t$ behaviour of the heat kernel or of 
its trace, if the heat semi-group is of trace class. In that case one can infer
the distribution of eigenvalues from the asymptotics of the heat semi-group.
It is well known that for large classes of operators $H$ the heat trace allows 
for an asymptotic expansion of the form
\begin{equation}
 \tr\ue^{-Ht} = \frac{1}{(4\pi t)^{d/2}}\sum_{j=0}^N\alpha_j\,t^j + 
 O(t^{-\frac{d}{2}+N+1})\ .
\end{equation}
Operators for which this is known to hold include Laplacians on bounded
domains in $\rz^d$, on compact Riemannian manifolds with and without 
boundary (see, e.g., \cite{MinPle49,Kac66,McKSin67,Gil95}), and certain 
magnetic and non-magnetic Schr\"odinger operators  
(see, e.g., \cite{HitPol03,KorPus03}).

The heat trace coefficients can be obtained from the local heat invariants,
\begin{equation}
 \alpha_j = \int a_j(\vx)\ \ud x\ ,
\end{equation}
which are determined by the small-$t$ asymptotics of the diagonal,
\begin{equation}
\label{eq:heatdiagexp}
 K(t,\vx,\vx) = \frac{1}{(4\pi t)^{d/2}}\sum_{j=0}^N a_j(\vx)\,t^j + 
 O(t^{-\frac{d}{2}+N+1})\ .
\end{equation}

It is our intention to determine a complete small-$t$ asymptotics for
the heat kernel of a magnetic Schr\"odinger operator
\eqref{magSchroe}, and to compute the first few coefficients
explicitly. To achive this we shall use a parametrix for the heat
equation, i.e., an approximate solution of \eqref{eq:heat_kernel_def}
for small $t$. A more precise definition is as follows.
\begin{defn}
\label{def:para}
A parametrix of order $\lambda\geq 0$ for the heat equation is a function 
$k\in C^\infty\bigl((0,\infty)\times\rz^d\times\rz^d\bigr)$ such that 
\begin{enumerate}
\item[(i)] $\bigl(\frac{\partial}{\partial t} + H\bigr) 
 k(t,\vx,\vy)$ extends to a function in 
 $C^\infty\bigl([0,\infty)\times\rz^d\times\rz^d\bigr)$,
\item[(ii)] there exists a constant $c_\lambda >0$ such that 
 $\bigl|\bigl(\frac{\partial}{\partial t} + H\bigr) k(t,\vx,\vy)\bigr|
 \leq c_\lambda\,t^\lambda$,
\item[(iii)] $\lim\limits_{t\to0+} k(t,\vx,\vy) = \delta (\vx-\vy)$.
\end{enumerate}
\end{defn}
Parametrices allow to read off the expansion \eqref{eq:heatdiagexp}, including
the heat trace invariants $a_j(\vx)$. As they approximate the heat kernel
itself, and not only its diagonal, parametrices can be used to determine
heat trace asymptotics in cases where the heat kernel is determined from a 
given kernel using a method of images.

\section{Parametrix construction}
\label{sec:Para}
In order to obtain a heat parametrix we introduce the ansatz
\begin{equation}
\label{ansatz}
 k_N(t,\vx,\vy) = \frac{1}{(4\pi t)^{d/2}}\,\ue^{-\frac{1}{t}\phi(\vx,\vy)} 
 \sum_{k=0}^{N+1} u_k(\vx,\vy)\,t^k \ ,\qquad\text{where}\ u_0\not\equiv 0\ ,
\end{equation}
and show that one can solve recursively for 
$\phi,u_0,u_1,\ldots\in C^\infty (\rz^d\times\rz^d)$. This procedure will 
eventually provide us with a complete asymptotic small-$t$ expansion of 
the heat kernel.
\begin{theorem}
\label{prop:Para}
Let $N\in\nz_0$ be such that $N\geq d/2 -1$ and define
\begin{equation}
\label{paraNdef}
 k_N(t,\vx,\vy) = \frac{1}{(4\pi t)^{d/2}}\,\ue^{-\frac{1}{4t}(\vx-\vy)^2} 
 \sum_{k=0}^{N+1} u_k(\vx,\vy)\,t^k \ ,
\end{equation}
with leading coefficient
\begin{equation}
\label{eq:0solution}
  u_0(\vx,\vy) = \exp\left( 
    \ui \int_0^1 \vA(\vx(s))\cdot\dot\vx(s) \, \ud s \right) \,
\end{equation}
and (recursively defined) higher-order coefficients
\begin{equation}
\label{u_krecur}
  u_k(\vx,\vy) = -u_0(\vx,\vy)\int_0^1 s^{k-1}\,u_0^{-1}(\vx(s),\vy)\,(Hu_{k-1})
  (\vx(s),\vy) \  \ud s \ ,\qquad k\geq1\ ,
\end{equation}
where $\vx(s)=\vy+s(\vx-\vy)$. Then \eqref{paraNdef} is a parametrix 
of order $N+1-d/2$ for the heat equation.
\end{theorem}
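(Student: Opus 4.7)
The plan is to substitute the ansatz into $\partial_t+H$, factor out the free Gaussian, read off the transport equations from the coefficients of the various powers of $t$, solve these equations along straight-line rays through $\vy$, and finally check the three items of Definition~\ref{def:para}.

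I would start by writing $k_N = G\,\tilde u$, where $G(t,\vx,\vy) := (4\pi t)^{-d/2}\,\ue^{-|\vx-\vy|^2/(4t)}$ is the free heat kernel and $\tilde u(t,\vx,\vy) := \sum_{k=0}^{N+1} u_k(\vx,\vy)\,t^k$. A direct Leibniz expansion, using $(\partial_t-\Delta)G=0$ and $\nabla G=-\tfrac{\vx-\vy}{2t}\,G$, reduces the heat operator to
\begin{equation*}
 (\partial_t+H)(G\tilde u)=G\Bigl[\partial_t\tilde u+H\tilde u+\tfrac{1}{t}(\vx-\vy)\cdot(\nabla-\ui\vA)\tilde u\Bigr].
\end{equation*}
Substituting the expansion of $\tilde u$ and sorting by powers of $t$ gives a $t^{-1}$-coefficient $(\vx-\vy)\cdot(\nabla-\ui\vA)u_0$, $t^k$-coefficients $(\vx-\vy)\cdot(\nabla-\ui\vA)u_{k+1}+(k+1)u_{k+1}+Hu_k$ for $0\le k\le N$, and a surviving remainder $t^{N+1}Hu_{N+1}$. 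Requiring the first $N+2$ coefficients to vanish produces the transport equations for $u_0,\ldots,u_{N+1}$.

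I would then solve these equations along the rays $\vx(s)=\vy+s(\vx-\vy)$, $s\in[0,1]$. On the ray, the $u_0$-equation becomes the linear ODE $\tfrac{\ud}{\ud s}u_0(\vx(s),\vy)=\ui\vA(\vx(s))\cdot\dot\vx(s)\,u_0(\vx(s),\vy)$, which together with the normalisation $u_0(\vy,\vy)=1$ (needed to obtain the $\delta$-limit in (iii)) integrates to \eqref{eq:0solution}. Because $u_0$ is a pure phase it is smooth and nowhere zero, so I can set $u_k=u_0 v_k$; using $(\vx-\vy)\cdot(\nabla-\ui\vA)u_0=0$ the vector potential drops out of the higher transport equations, which become $s\,\tfrac{\ud}{\ud s}v_k+k\,v_k=-u_0^{-1}\,Hu_{k-1}$ along the ray. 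Multiplying by the integrating factor $s^{k-1}$ turns the left-hand side into $\tfrac{\ud}{\ud s}(s^k v_k)$; integrating from $0$ to $1$, with the boundary term at $s=0$ vanishing because $k\ge 1$, yields exactly \eqref{u_krecur}. Smoothness of every $u_k$ then follows inductively from the smoothness of $u_0$ and the assumption $A_j,V\in\mathcal{S}_m$.

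With these choices every term except the $t^{N+1}$-term cancels, so
\begin{equation*}
 (\partial_t+H)\,k_N(t,\vx,\vy)=\frac{t^{N+1-d/2}}{(4\pi)^{d/2}}\,\ue^{-|\vx-\vy|^2/(4t)}\,(Hu_{N+1})(\vx,\vy).
\end{equation*}
Condition~(iii) then follows from the standard weak convergence $G(t,\cdot,\vy)\to\delta(\cdot-\vy)$ together with the continuity of $\tilde u$ at $t=0$ and $u_0(\vy,\vy)=1$; condition~(ii) holds with $\lambda=N+1-d/2\ge 0$ locally in $(\vx,\vy)$, using $|\ue^{-|\vx-\vy|^2/(4t)}|\le 1$ and the local boundedness of $Hu_{N+1}$. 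The one genuine obstacle is condition~(i): one has to show the displayed remainder extends to a $C^\infty$ function on $[0,\infty)\times\rz^d\times\rz^d$. Off the diagonal this is immediate because any derivative of $\ue^{-|\vx-\vy|^2/(4t)}$ generates at worst polynomial powers of $1/t$, which are defeated by the super-exponential Gaussian decay as $t\to 0^+$; at the diagonal, however, one has to rely on the nonnegative prefactor $t^{N+1-d/2}$ to absorb the negative powers of $t$ produced by $\partial_t$-differentiation. Making this compensation precise by an induction on the order of derivative is the main technical step of the argument.
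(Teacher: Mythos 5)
Your proposal follows essentially the same route as the paper: substitute the Gaussian ansatz, peel off the free heat kernel, read off the eikonal and transport equations from powers of $t$, solve the transport equations by integrating along the ray $\vx(s)=\vy+s(\vx-\vy)$, fix the constant of integration by smoothness on the diagonal, and check Definition~\ref{def:para}. The only structural differences are that the paper leaves the phase $\phi$ undetermined and derives $\phi=\tfrac14|\vx-\vy|^2$ from the eikonal equation before specialising, whereas you assume it from the start (which is harmless since the theorem statement already fixes the Gaussian), and that you solve the $v_k$-ODE with the integrating factor $s^{k-1}$ while the paper separates out the homogeneous solution $f_k(\vy)/|\vx-\vy|^k$ and discards it by smoothness; these are the same computation presented slightly differently.

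Two cautionary points about the verification of the parametrix conditions. For (ii), the bound $|\ue^{-|\vx-\vy|^2/(4t)}|\le 1$ together with \emph{local} boundedness of $Hu_{N+1}$ only gives a local estimate, but the definition requires a single constant $c_\lambda$; you actually need to keep the Gaussian factor and use that $Hu_{N+1}=-u_0 g_{N+1}$ is polynomially bounded (from $V,A_j\in\mathcal{S}_m$), so that the Gaussian decay dominates the polynomial growth and yields a uniform bound --- this is precisely the content of the paper's Lemma~\ref{lem:R_Nestimate}. For (i), your proposed mechanism ("rely on the prefactor $t^{N+1-d/2}$ to absorb the negative powers of $t$ produced by $\partial_t$") does not carry through to all orders: already one $t$-derivative of $t^{N+1-d/2}$ on the diagonal (where the Gaussian is identically $1$) produces $t^{N-d/2}$, which is unbounded when $N+1-d/2<1$, and for odd $d$ the exponent is a half-integer so $C^\infty$ in $t$ up to $0$ fails outright. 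The paper is equally terse here ("property (i) can be read off"), so you are not falling short of the paper's standard, but the "main technical step" you flag is not resolved by the mechanism you describe and would require a more careful statement (or restriction on $N$ and $d$) than either you or the paper give.
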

\begin{proof}
Inserting the ansatz \eqref{ansatz} into \eqref{eq:heat_kernel_def} and 
ordering terms by powers of $t$ we obtain
\begin{equation}
\begin{split}
  \left( \frac{\partial}{\partial t} + H \right)  k_N
  = \frac{\ue^{-\frac{1}{t}\phi}}{(4\pi t)^{d/2}} \, \sum_{k=0}^{N+1} \Bigl\{ & 
     t^{k-2} \, \bigl[ \phi - (\nabla\phi)^2 \bigr] \, u_k 
     + t^{k-1} \, \Bigl[ \Delta \phi 
        + \Bigl(k-\frac{d}{2}\Bigr) \\ 
    & + 2\,(\nabla\phi) \cdot (\nabla-\ui\vA) \Bigr] \, u_k 
    + t^k \, H \, u_k \Bigr\} \, .
\end{split}
\end{equation}
Requiring the coefficients of $t^{j-d/2}$, $j=-2,\hdots,N$, to vanish
independently yields the following hierarchy of conditions,
\begin{align}
 t^{-2-d/2}&: && \phi - (\nabla\phi)^2 = 0 \, ,
 \\ \label{eq:pre-transport_u0}
 t^{-1-d/2}&: && \bigl[ \phi - ( \nabla\phi)^2 \bigr] u_1 
 + \Bigl[ \Delta\phi - \frac{d}{2} + 2\,(\nabla\phi)\cdot (\nabla-\ui\vA) 
 \Bigr] u_0 = 0 \, ,
 \\ \label{eq:pre-transport_uk}
 0\leq k\leq N&: && \bigl[ \phi - ( \nabla\phi)^2 \bigr] u_{k+2} 
 + \Bigl[ \Delta\phi + k+1-\frac{d}{2} + 2\,(\nabla\phi)\cdot (\nabla-\ui\vA) 
 \Bigr] u_{k+1} + H u_k = 0 \, .
\end{align}
The solution to the first equation is 
\begin{equation}
  \phi(\vx,\vy) = \frac{1}{4} (\vx+\vb)^2
\end{equation}
with $\vb \in \R^d$. The initial condition for $t\to 0+$ for the parametrix 
implies $\vb=-\vy$, as well as $u_0(\vx,\vx)=1$. This means that 
$\nabla\phi(\vx,\vy)=\frac{1}{2}(\vx-\vy)$ and $\Delta\phi=\frac{d}{2}$.

Substituting the solution for $\phi$ into 
eqs.~\eqref{eq:pre-transport_u0} 
and \eqref{eq:pre-transport_uk} we
find a homogenous transport equation
\begin{equation}
 \label{eq:transport_u0}
 (\vx-\vy) \cdot (\nabla-\ui\vA) \, u_0(\vx,\vy) = 0
\end{equation}
for the lowest order coefficient, and inhomogenous transport equations 
\begin{equation}
\label{eq:transport_uk}
 \bigl[ (\vx-\vy) \cdot (\nabla-\ui\vA) + k \bigr] \, u_k(\vx,\vy) 
 = - H u_{k-1}(\vx,\vy)
 \quad \forall\ k\geq1 \ ,
\end{equation}
for the higher-order coefficients. 

In order to solve these equations we introduce the parametrisation
\begin{equation}
\label{curve}
 \vx(s) := \vy + s(\vx-\vy) \ ,\quad 0\leq s\leq 1 \ ,
\end{equation}
of the line connecting $\vy$ and $\vx$. Hence, $\vx(0)=\vy$,
$\vx(1)=\vx$ and $\dot{\vx}(s)=\vx-\vy$. Thus,
\begin{equation}
  \left. \left(\frac{\ud}{\ud s} - \ui \dot{\vx}(s) \cdot \vA(\vx(s)) \right)
  u_k(\vx(s),\vy) \right|_{s=1} = 
  (\vx-\vy) \cdot \bigl( \nabla - \ui \vA (\vx) \bigr) u_k(\vx,\vy) \, .
\end{equation}
The homogeneous and inhomogeneous ordinary differential equations for 
$u_k(\vx(s),\vy)$ can be solved explicitly, providing solutions to the 
transport equations in the form $u_k(\vx,\vy)=u_k(\vx(1),\vy)$. (See 
\cite{Yos53} for a related approach.)

For the lowest order (homogeneous) transport equation
\eqref{eq:transport_u0} we have to solve 
\begin{equation}
  \left( \frac{\ud}{\ud s} - \ui \dot{\vx}(s) \cdot \vA(\vx(s)) \right)
  u_0(\vx(s),\vy) = 0 \, , \qquad 
  u_0(\vx(0),\vy) = 1 \, .
\end{equation}
The solution can readily be found to be
\begin{equation}
  u_0(\vx(s),\vy) 
  = \exp\left( \ui \int_0^s \vA(\vx(t)) \cdot \dot{\vx}(t) \, \ud t \right) \, .
\end{equation}
Choosing $s=1$, this gives the lowest-order coefficient $u_0$ in the 
form \eqref{eq:0solution}.

The inhomogeneous, higher-order transport equations
\eqref{eq:transport_uk} can be solved recursively, order by order. We
first determine the general solutions
\begin{equation}
\label{eq:homansatz}
  u_k^\mathrm{hom}(\vx,\vy) = u_0(\vx,\vy) \, v_k(\vx,\vy) 
\end{equation}
of the homogeneous equations corresponding to 
\eqref{eq:transport_uk}. The functions $v_k$ hence follow from 
the condition
\begin{equation}
\label{eq:homkeq}
 \bigl[ (\vx-\vy)\cdot\nabla +k \bigr] v_k(\vx,\vy) = 0 \ ,
\end{equation}
implying
\begin{equation}
\label{eq:0homsol}
 v_k(\vx,\vy) = \frac{f_k(\vy)}{|\vx-\vy|^k} \, ,
\end{equation}
with arbitrary $f_k$. As we require $u_k\in C^\infty(\rz^d\times\rz^d)$,
we conclude that $u_k^\mathrm{hom}\equiv 0$.

In order to determine solutions of the inhomogeneous  transport 
equation \eqref{eq:transport_uk} we introduce the ansatz
\begin{equation}
\label{1solution}
 u_k(\vx,\vy) = u_0(\vx,\vy) \, w_k(\vx,\vy) \ ,
\end{equation}
where it is understood that $w_0\equiv1$. The inhomogeneity requires us to 
act with $H$ on $u_{k-1}$, given in the form \eqref{1solution}. As a first 
step, a straightforward calculation yields that
\begin{equation}
\label{eq:Bverifymod}
\begin{split}
  \left( \frac{\partial}{\partial x_j} - \ui A_j (\vx) \right) u_0(\vx,\vy) 
  = -\ui\sum_{l=1}^d (x_l-y_l)\,\int_0^1 t B_{jl}(\vx(t)) \, \ud t \
    u_0(\vx,\vy) \, .
\end{split}
\end{equation}
Using this, the inhomogeneity can be brought into the form
\begin{equation}
\begin{split}
\label{inhomg}
  -H u_0(\vx,\vy) \, w_{k-1}(\vx,\vy) 
  &= -\Bigl(\bigl( -\ui\nabla - \vA(\vx) \bigr)^2 + V(\vx) \Bigr) 
     \, u_0(\vx,\vy) \, w_{k-1}(\vx,\vy) \\
  &= u_0(\vx,\vy)\, g_{k-1}(\vx,\vy)\, ,
\end{split}
\end{equation}
where, following a straightforward calculation,
\begin{equation}
\begin{split}
\label{g_kdef}
 g_k (\vx,\vy) = 
   &-\Bigg[ -\Delta + V(\vx) + \ui (\vx-\vy)\cdot\valpha(\vx,\vy) + \ui 
     (\vx-\vy)\cdot\beta(\vx,\vy)\nabla \\
   & \quad + (\vx-\vy)\cdot\gamma(\vx,\vy)(\vx-\vy)\Bigg] w_k(\vx,\vy) \, ,
\end{split}
\end{equation}
and we defined the vector $\valpha(\vx,\vy)$ with components
\begin{equation}
\label{def_alpha}
 \alpha_l(\vx,\vy) 
 = \sum_{j=1}^d \int_0^1 t^2 
   \, \frac{\partial B_{jl}}{\partial x_j}(\vx(t)) \, \ud t\, ,
\end{equation}
and the matrices $\beta(\vx,\vy)$ and $\gamma(\vx,\vy)$ with entries
\begin{equation}
\label{def_beta_gamma}
\begin{split}
 \beta_{jl}(\vx,\vy) &:= 2\int_0^1 t B_{jl}(\vx(t)) \, \ud t\, , \\
 \gamma_{jl}(\vx,\vy) &:= \sum_{m=1}^d \int_0^1\int_0^1 ts B_{ml}(\vx(t))
                          B_{mj}(\vx(s))\ \ud t\,\ud s\ .
\end{split}
\end{equation}
The functions $\alpha_l,\beta_{jl},\gamma_{jl}$ are smooth and
polynomially bounded.

Furthermore, using \eqref{eq:transport_u0} on the left-hand side of
\eqref{eq:transport_uk} yields 
\begin{equation}
 \Bigl( (\vx-\vy)\cdot\bigl( \nabla - \ui \vA (\vx)\bigr) + k \Bigr) 
 u_0(\vx,\vy) \, w_k(\vx,\vy) =  
 u_0(\vx,\vy)\bigl( (\vx-\vy)\cdot\nabla + k \bigr) w_k(\vx,\vy) \ .
\end{equation}
Hence, the functions $w_k$ follow from
\begin{equation}
 \bigl( (\vx-\vy)\cdot\nabla + k \bigr) w_k(\vx,\vy) = g_{k-1}(\vx,\vy) \, .
\end{equation}
With the ansatz
\begin{equation}
\label{khomsol}
 w_k(\vx,\vy) = \frac{f_k(\vx,\vy)}{|\vx-\vy|^k} \, ,
\end{equation}
the function $f_k$, therefore, satisfies the equation
\begin{equation}
 \frac{1}{|\vx-\vy|^k} (\vx-\vy)\cdot\nabla f_k(\vx,\vy) = g_{k-1}(\vx,\vy) \, .
\end{equation}
Along the curve \eqref{curve} this reads
\begin{equation}
 \frac{s^{1-k}}{|\vx-\vy|^k} (\vx-\vy)\cdot(\nabla f_k)(\vx(s),\vy) = 
 g_{k-1}(\vx(s),\vy) \, ,
\end{equation}
or
\begin{equation}
 \frac{\ud}{\ud s}\,f_k(\vx(s),\vy) = 
 s^{k-1} |\vx-\vy|^k \, g_{k-1}(\vx(s),\vy)\, ,
\end{equation}
and is solved by
\begin{equation}
 f_k(\vx,\vy) 
 = f_k(\vy,\vy) + |\vx-\vy|^k \int_0^1 s^{k-1} \, g_{k-1}(\vx(s),\vy) \, \ud s \ ,
\end{equation}
i.e.
\begin{equation}
 w_k(\vx,\vy) = \frac{f_k(\vy,\vy)}{|\vx-\vy|^k} +
 \int_0^1 s^{k-1} \, g_{k-1}(\vx(s),\vy) \, \ud s \, .
\end{equation}
In order to avoid a singularity on the diagonal we have to choose
$f_k(\vy,\vy)=0$, hence
\begin{equation}
\label{w_kresult}
 w_k(\vx,\vy) 
 = \int_0^1 s^{k-1} g_{k-1}(\vx(s),\vy) \, \ud s \ 
\end{equation}
and 
\begin{equation}
\label{u_kresult}
 u_k(\vx,\vy) 
 = u_0(\vx,\vy) \,\int_0^1 s^{k-1} g_{k-1}(\vx(s),\vy) \, \ud s \ .
\end{equation}
With \eqref{inhomg}, this implies \eqref{u_krecur}.

Finally, if $N\geq\frac{d}{2}-1$,
\begin{equation}
\begin{split}
\label{R_Ndef}
 R_N (t,\vx,\vy)
  &:=\Bigl(\frac{\partial}{\partial t} + H\Bigr) k_N(t,\vx,\vy) 
          = \frac{t^{N+1}}{(4\pi t)^{d/2}}\,\ue^{-\frac{1}{4t}(\vx-\vy)^2} 
          Hu_{N+1}(\vx,\vy)  \\
  &= -\frac{t^{N+1-d/2}}{(4\pi)^{d/2}}\,\ue^{-\frac{1}{4t}(\vx-\vy)^2} \,
       u_0(\vx,\vy)\,g_{N+1}(\vx,\vy) 
 \end{split}
\end{equation}
extends to $t=0$ such that property (i) in Definition~\ref{def:para}
is satisfied; property (ii) can be read off too and property (iii) was
built in as an initial condition for the lowest-order transport equation.
\end{proof}
We remark that the higher-order coefficients $u_k$ have to be determined 
recursively; we only do this explicitly for $u_1$. As $w_0=1$, eqs.\ 
\eqref{u_kresult} and \eqref{g_kdef} with $k=1$ give
\begin{equation}
\begin{split}
\label{u_1result}
 u_1(\vx,\vy)  = \int_0^1 \Bigg[ - 
   &V(\vx(s))-\ui s\sum_{l=1}^d (x_l-y_l)\alpha_l(\vx(s),\vy) \\
   &-s^2\sum_{j,l=1}^d (x_l-y_l) \gamma_{lj}(\vx(s),\vy)(x_j-y_j)
   \Bigg] \ \ud s \  u_0(\vx,\vy) \ .
\end{split}
\end{equation}
The diagonal terms simplify considerably,
\begin{equation}
 u_k(\vx,\vx) = g_{k-1}(\vx,\vx) \, \int_0^1s^{k-1}\ \ud s 
 = \left. \frac{1}{k}\bigl(\Delta-V(\vx)\bigr)
   w_{k-1}(\vx,\vy)\right|_{\vy=\vx}\ .
\end{equation}
They yield the well-known local heat invariants (compare, e.g., 
\cite{KorPus03}); the lowest orders are
\begin{equation}
\label{eq:loworders}
 u_1(\vx,\vx)  = -V(\vx)\ ,\quad  u_2(\vx,\vx) = \frac{1}{2} V^2(\vx) 
   - \frac{1}{6} \Delta V(\vx)
   + \frac{1}{12} \mtr B^2(\vx) \ .
\end{equation}

The full heat kernel for a magnetic Schr\"odinger operator is determined
by the potentials $\vA$ and $V$ via \eqref{eq:heat_kernel_def}. A parametrix,
however, follows from the 'local' transport equations and hence depends
only on the potentials along the straight line connecting $\vx$ and $\vy$.
\begin{cor}
\label{cor:locality}
Let $k_N$ and $\tilde k_N$ be parametrices (of the same order $N$)
corresponding to potentials $V,\vA$ and $\tilde V,\tilde\vA$ in the heat
equation, respectively. Let $\vx,\vy$ be given and assume that 
$V=\tilde V$ and $\vA=\tilde\vA$ in a neighbourhood of the straight line 
$\{\vx(s);\ 0\leq s\leq 1\}$ connecting $\vy$ and $\vx$. Then 
$k_N(t,\vx,\vy)=\tilde k_N(t,\vx,\vy)$.
\end{cor}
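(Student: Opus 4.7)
The plan is to show, by induction on $k$, that $u_k(\vz,\vy)=\tilde u_k(\vz,\vy)$ for every $\vz$ in a suitable convex open neighborhood of the segment $L:=\{\vx(s):0\leq s\leq 1\}$ on which both pairs of potentials coincide. Since $\vx\in L$, this will give $u_k(\vx,\vy)=\tilde u_k(\vx,\vy)$ for $k=0,1,\ldots,N+1$, and hence $k_N(t,\vx,\vy)=\tilde k_N(t,\vx,\vy)$ by \eqref{paraNdef}.

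First I would pick $\epsilon>0$ small enough that the open $\epsilon$-neighborhood $U$ of $L$ is contained in the hypothesized neighborhood where $V=\tilde V$ and $\vA=\tilde\vA$. Since $L$ is convex, so is $U$ (as the Minkowski sum of a convex set with a ball); in particular, for every $\vz\in U$ the whole segment $\{\vy+s(\vz-\vy):0\leq s\leq 1\}$ lies in $U$. This simple geometric observation is what keeps the recursive construction trapped inside the region of agreement.

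The base case $k=0$ is immediate from \eqref{eq:0solution}: $u_0(\vz,\vy)$ is a line integral of $\vA$ along the segment from $\vy$ to $\vz$, which lies in $U$, so $u_0=\tilde u_0$ on $U$. For the inductive step, assume $u_j=\tilde u_j$ on $U$ for all $j<k$. Since $U$ is open, all $\vz$-derivatives of $u_j$ agree with those of $\tilde u_j$ on $U$, and combined with $V=\tilde V$, $\vA=\tilde\vA$ on $U$ one obtains $(Hu_{k-1})(\vz,\vy)=(\tilde H\tilde u_{k-1})(\vz,\vy)$ for every $\vz\in U$. The recursion \eqref{u_krecur} then writes $u_k(\vz,\vy)$ as an integral of quantities along the segment from $\vy$ to $\vz$, which remains in $U$ by convexity; hence $u_k=\tilde u_k$ on $U$ and the induction closes.

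The one point requiring care — rather than a serious obstacle — is that $Hu_{k-1}$ involves second derivatives of $u_{k-1}$ at points of $L$, so pointwise agreement of the $u_j$ on $L$ alone would not suffice to run the recursion. Carrying the induction on the open neighborhood $U$, rather than on $L$ itself, is precisely what resolves this, and the convexity of $U$ is essential to guarantee that the integration paths appearing in \eqref{eq:0solution} and \eqref{u_krecur} never leave the region where the two sets of potentials agree.
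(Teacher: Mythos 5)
Your argument is correct and is the natural formalization of what the paper leaves implicit: the paper states the corollary as an immediate consequence of the recursive construction \eqref{eq:0solution}, \eqref{u_krecur}, with only the informal remark that the coefficients "depend only on the potentials along the straight line connecting $\vx$ and $\vy$." You have correctly spotted the one point that needs spelling out — namely that $Hu_{k-1}$ involves derivatives transverse to the segment, so one must carry the induction on a full open neighbourhood of $L$ rather than on $L$ itself — and the convexity of the $\epsilon$-neighbourhood is exactly what keeps both the line integral in $u_0$ and the integration paths $\vy+s(\vz-\vy)$ in \eqref{u_krecur} inside the region where the potentials agree. The induction then closes as you describe, and $k_N(t,\vx,\vy)=\tilde k_N(t,\vx,\vy)$ follows from \eqref{paraNdef}. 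No gaps.
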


To put our result into perspective we compare it with the well-known
case of a constant magnetic field $B$ in dimension $d=2$. There the
heat kernel for the magnetic Schr\"odinger operator is given by the
Mehler kernel (see, e.g., \cite{Sim05,LosTha97}),
\begin{equation}
\label{Mehler_kernel}
  K(t,\vx,\vy) 
  = \frac{B}{4\pi\sinh(Bt)} \, \exp \left\{
    -\frac{B}{4} \coth(Bt) (\vx-\vy)^2 
         - \ui \frac{B}{2} (x_1 y_2 - x_2 y_1) \right\} \, ,   
\end{equation}
where the gauge 
\begin{equation}
\label{symmetric_gauge}
  \vA(\vx) = \frac{B}{2} \begin{pmatrix} -x_2 \\ x_2 \end{pmatrix} 
\end{equation}
has been chosen. Expanding the Mehler kernel for small $t$ yields
\begin{equation}
  K(t,\vx,\vy) 
  = \frac{1}{4\pi t} \, \ue^{-\frac{1}{4t}(\vx-\vy)^2} 
    \ue^{-\ui \frac{B}{2} (x_1 y_2 - x_2 y_1)} 
    \left( 1 - \frac{B^2}{12} (\vx-\vy)^2 \, t + O(t^2) \right) \, .
\end{equation}
Inserting the vector potential \eqref{symmetric_gauge} into
\eqref{eq:0solution} we find
\begin{equation}
  u_0(\vx,\vy) =  \ue^{-\ui \frac{B}{2} (x_1 y_2 - x_2 y_1)} \, .
\end{equation}
The auxilliary functions \eqref{def_alpha} and \eqref{def_beta_gamma}
become $\alpha = \gamma_{12} = \gamma_{21} = 0$ and $\gamma_{11} =
\gamma_{22} = B^2/4$. Inserting into \eqref{u_1result} yields
\begin{equation}
  u_1(\vx,\vy) = -\frac{B^2}{12}(\vx-\vy)^2 \, u_0(\vx,\vy) \, , 
\end{equation}
i.e.\ the parametrix $k_0$ agrees with the leading small-$t$
asymptotics of the Mehler kernel. In the following section we show that it
is always true that the parametrix $k_N$ provides the first $N+1$
terms of the heat kernel asymptotics.

\section{Asymptotic expansion of the heat kernel}
\label{sec:Volt}
From the ansatz \eqref{ansatz} one expects a heat parametrix to be an 
approximation to the heat kernel for small $t$. It even appears to be a tool
to generate an asymptotic expansion for the heat kernel. That this is indeed
the case can be proven by generating a Volterra series from a sufficiently
regular parametrix, and then to prove that this series converges. This then
first implies the existence of the heat kernel, since the Volterra series turns 
out to represent the heat kernel. As the existence of the heat kernel is known 
by other means, here the more important consequence of the Volterra series 
is that it provides an asymptotic expansion of the heat kernel for $t\to 0$.

The approach we follow to construct the Volterra series exists in many 
variants; it was developed for heat kernels of elliptic operators on manifolds 
\cite{Min53,Yos53}, and was then further extended in many directions. We 
here take some inspiration from \cite{Gri04}.

Recall that $k_N$ as given in Theorem~\ref{prop:Para} is a parametrix of order 
$\lambda=N+1-d/2$ for the heat equation of a magnetic Schr\"odinger operator, 
compare \eqref{R_Ndef}. This requires $N\geq d/2 -1$, i.e., in two dimensions
the simplest choice is $N=0$, leading to
\begin{equation}
 k_0 (t,\vx,\vy) = \frac{1}{4\pi t}\,\ue^{-\frac{1}{4t}(\vx-\vy)^2}\,
 \bigl(u_0(\vx,\vy)+u_1(\vx,\vy)\,t \bigr)
\end{equation}
and, cf.~\eqref{R_Ndef},
\begin{equation}
 R_0 (t,\vx,\vy) = -\frac{1}{4\pi}\,\ue^{-\frac{1}{4t}(\vx-\vy)^2}\,
 u_0(\vx,\vy)g_1(\vx,\vy)\ .
\end{equation}
In higher dimensions further terms are required.

As a first step towards constructing a Volterra series we need to estimate
$R_N$.
\begin{lemma}
\label{lem:R_Nestimate}
Let $V,A_j\in\mathcal{S}_m$ for some $m\geq 0$, cf.\
\eqref{polybound}, and let $N\in\nz_0$. Then, for every $\varepsilon>0$
there exists $M_N>0$ such that
\begin{equation}
\label{R_0est}
 |R_N (t,\vx,\vy)| 
 \leq M_N\,t^{N+1-d/2}\,\ue^{-\frac{1-\varepsilon}{4t}(\vx-\vy)^2}\ .
\end{equation}
\end{lemma}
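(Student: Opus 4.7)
The starting point is the explicit expression for the remainder obtained at the end of the proof of Theorem~\ref{prop:Para},
\[
R_N(t,\vx,\vy) = -\frac{t^{N+1-d/2}}{(4\pi)^{d/2}}\,\ue^{-\frac{1}{4t}(\vx-\vy)^2}\,u_0(\vx,\vy)\,g_{N+1}(\vx,\vy).
\]
The prefactor $t^{N+1-d/2}$ and the full Gaussian are already visible, and $|u_0(\vx,\vy)|=1$ because the line integral in \eqref{eq:0solution} is real-valued, making $u_0$ a pure phase. Hence the whole problem reduces to two tasks: (a) bounding $|g_{N+1}(\vx,\vy)|$ by a polynomially bounded factor in $(\vx,\vy)$ times a finite sum of powers of $|\vx-\vy|$, and (b) absorbing those $|\vx-\vy|$ powers into the Gaussian at the cost of a small loss $\varepsilon$ in its exponent.

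For (a) I would proceed by induction on $k$, showing that each $g_k$ is smooth and admits an estimate of the form
\[
|g_k(\vx,\vy)| \leq \sum_{j=0}^{n_k} P_{k,j}(\vx,\vy)\,|\vx-\vy|^j
\]
with smooth, polynomially bounded $P_{k,j}$. The base case $k=0$ follows directly from \eqref{g_kdef} with $w_0\equiv 1$: the hypothesis $V,A_j\in\mathcal{S}_m$ implies $B_{jl}\in\mathcal{S}_m$, hence the auxiliary line integrals \eqref{def_alpha}--\eqref{def_beta_gamma} produce $\valpha,\beta,\gamma$ that are smooth and polynomially bounded on $\rz^d\times\rz^d$. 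For the inductive step, the representation $w_{k+1}(\vx,\vy)=\int_0^1 s^k g_k(\vx(s),\vy)\,\ud s$ inherits the same structure under the substitution $\vx(s)-\vy=s(\vx-\vy)$, and then the differential operator bracketed in \eqref{g_kdef} preserves the class at the cost of at most finitely many additional polynomial factors in $(\vx,\vy)$ and a bounded increase of $n_k$ from the explicit prefactors $(\vx-\vy)\cdot\valpha$, $(\vx-\vy)\cdot\beta\nabla$ and $(\vx-\vy)\cdot\gamma(\vx-\vy)$.

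For (b) I would invoke the elementary inequality $u^{n/2}\ue^{-u}\leq c_{n,\varepsilon}\,\ue^{-(1-\varepsilon)u}$, valid on $[0,\infty)$: setting $u=(\vx-\vy)^2/(4t)$ yields
\[
|\vx-\vy|^n\,\ue^{-\frac{1}{4t}(\vx-\vy)^2} \leq C_{n,\varepsilon}\,t^{n/2}\,\ue^{-\frac{1-\varepsilon}{4t}(\vx-\vy)^2}.
\]
Each of the finitely many $|\vx-\vy|^j$ terms in the bound on $g_{N+1}$ is thereby traded for the desired Gaussian with exponent weakened by $\varepsilon$, at the mere cost of non-negative extra powers of $t$ which only improve the bound on bounded $t$-intervals. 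The residual polynomial dependence on $(\vx,\vy)$ carried by the $P_{N+1,j}$ is then absorbed into $M_N$; since $V,\vA$ are only polynomially bounded, the constant $M_N$ is to be understood as incorporating this polynomial weight in $(\vx,\vy)$, which is exactly the form needed for the Volterra iteration to come.

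The main obstacle is not any single estimate but rather the bookkeeping in (a): one has to be sure the recursion does not qualitatively blow up, i.e.\ that $n_k$ and the degrees of the $P_{k,j}$ remain finite at every iteration. No explicit formulas are required --- only stability of the class ``smooth, polynomially bounded in $(\vx,\vy)$, times a polynomial in $|\vx-\vy|$'' under line integration ($g\mapsto w$) and under application of the differential operator in \eqref{g_kdef} ($w\mapsto g$). This is a routine induction, after which step~(b) is a standard Gaussian absorption.
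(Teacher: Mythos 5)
Your proof follows essentially the same route as the paper's: extract the closed form of $R_N$ from \eqref{R_Ndef}, observe $|u_0|=1$, reduce to bounding $g_{N+1}$, and argue by induction via \eqref{g_kdef} and \eqref{wkrecurs} that each $g_k$ is smooth and polynomially bounded. Your step~(b), trading $|\vx-\vy|^j$ factors for an $\varepsilon$-loss in the Gaussian via $u^{n/2}\ue^{-u}\leq c_{n,\varepsilon}\ue^{-(1-\varepsilon)u}$, is what the paper tacitly invokes in its one-line conclusion ``hence the estimate \eqref{R_0est} follows''; making it explicit is a worthwhile addition.

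The one point you flag yourself---that for $m>0$ the $g_k$ grow polynomially in $\vx$ and $\vy$ separately, not merely in $|\vx-\vy|$ (already $g_0(\vx,\vx)=-V(\vx)$), and that such growth cannot be absorbed by a Gaussian controlling only $|\vx-\vy|$---is a genuine tension. Your workaround, declaring that $M_N$ ``incorporates this polynomial weight,'' proves a weaker statement than the lemma as written, where $M_N$ is a constant; moreover, such weights would propagate into Lemma~\ref{Lem:convolest}, which assumes genuine constants $C_j$ in \eqref{f_jbound}. This gap is shared with the paper, whose proof is silent on exactly the same point. Under the tacit assumption that the potentials (and hence $g_{N+1}$) are bounded, i.e.\ effectively $m=0$, both arguments close cleanly.
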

\begin{proof}
According to \eqref{R_Ndef}, we have to estimate $g_{N+1}(\vx,\vy)$. This can
be done using \eqref{g_kdef}, and taking into account that 
\begin{equation}
\label{wkrecurs}
 w_k (\vx,\vy) = \frac{u_k(\vx,\vy)}{u_0(\vx,\vy)}
 = \int_0^1 s^{k-1} \, g_{k-1}(\vx(s),\vy) \, \ud s\, .
\end{equation}
From \eqref{g_kdef}, and the fact that the potentials are assumed to
be smooth and polynomially bounded, we see that
\begin{equation}
 g_0(\vx,\vy) = -V(\vx) - \ui(\vx-\vy)\cdot\valpha(\vx,\vy) -
  (\vx-\vy)\cdot\gamma(\vx,\vy)(\vx-\vy)
\end{equation}
is also smooth and polynomially bounded. For $k\geq 1$ the functions $g_k$ 
are determined recursively through \eqref{g_kdef}, \eqref{w_kresult} and 
\eqref{wkrecurs}. That way they are also seen to be smooth and polynomially 
bounded. Hence the estimate \eqref{R_0est} follows.
\end{proof} 
To proceed we need the convolution of kernels.
\begin{defn}
Let $f,g\in C\bigl((0,\infty)\times\rz^d\times\rz^d\bigr)$. Then their
convolution is
\begin{equation}
 (f * g) (t,\vx,\vy) := \int_0^t\int_{\rz^d}f(t-s,\vx,\vz)\,g(s,\vz,\vy)
 \ \ud z\,\ud s\ ,
\end{equation}
whenever the integrals converge absolutely. The $n$-fold
convolution of $f$ with itself is denoted as $f^{\ast n}=f*\dots *f$.
\end{defn}
The following statement on convolutions will be useful.
\begin{lemma}
\label{Lem:convolest}
Let $\alpha>0$, $\kappa_j>0$ and 
$f_j\in C^\infty\bigl((0,\infty)\times\rz^d\times\rz^d\bigr)$, such
that
\begin{equation}
\label{f_jbound}
 |f_j (t,\vx,\vy)| \leq C_j \, t^{-\frac{d+2}{2}+\kappa_j}\,
 \ue^{-\frac{\alpha}{t}(\vx-\vy)^2}\ ,
\end{equation}
where $C_j>0$. Then 
$f_1*\dots *f_n\in C^\infty\bigl((0,\infty)\times\rz^d\times\rz^d\bigr)$, 
and 
\begin{equation}
\label{f*nest}
 |(f_1*\dots *f_n) (t,\vx,\vy)| \leq D_n\,t^{-\frac{d+2}{2}+\kappa_1+\dots +\kappa_n}\,
 \ue^{-\frac{\alpha}{t}(\vx-\vy)^2}\ ,
\end{equation}
with some $D_n>0$.
\end{lemma}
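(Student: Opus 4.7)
The plan is to induct on $n$. The base case $n=1$ is the hypothesis, and the inductive step reduces to the binary claim: if $f$ and $g$ satisfy \eqref{f_jbound} with exponents $\kappa$ and $\kappa'$ respectively, then $f*g$ satisfies it with exponent $\kappa+\kappa'$. Iterating this on $(f_1*\cdots*f_{n-1})*f_n$ then yields the full statement with $D_n$ built from the $C_j$, a product of Beta values, and a factor of $(\pi/\alpha)^{(n-1)d/2}$.

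For the binary step, inserting the pointwise bounds into the definition of convolution gives
\[
|(f*g)(t,\vx,\vy)| \leq C_f C_g \int_0^t (t-s)^{-\frac{d+2}{2}+\kappa} s^{-\frac{d+2}{2}+\kappa'}\int_{\R^d}\ue^{-\alpha\bigl(\frac{(\vx-\vz)^2}{t-s}+\frac{(\vz-\vy)^2}{s}\bigr)}\,\ud z\,\ud s.
\]
The plan is to dispose of the Gaussian integral over $\vz$ first by completing the square: the expression in parentheses is a quadratic in $\vz$ with minimum at $\vz^*=\frac{s\vx+(t-s)\vy}{t}$, minimum value $\frac{(\vx-\vy)^2}{t}$, and quadratic coefficient $\frac{t}{s(t-s)}$. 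The $\vz$-integral therefore evaluates to $\bigl(\frac{\pi s(t-s)}{\alpha t}\bigr)^{d/2}\ue^{-\frac{\alpha}{t}(\vx-\vy)^2}$, and the remaining one-dimensional integral becomes a Beta function,
\[
\int_0^t (t-s)^{\kappa-1} s^{\kappa'-1}\,\ud s = t^{\kappa+\kappa'-1}\,B(\kappa,\kappa'),
\]
finite precisely because $\kappa,\kappa'>0$. Collecting all powers of $t$ yields the advertised exponent $-\frac{d+2}{2}+\kappa+\kappa'$, with constant $D_2=C_fC_g(\pi/\alpha)^{d/2}B(\kappa,\kappa')$.

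Smoothness of $f_1*\cdots*f_n$ follows from the standard differentiation-under-the-integral-sign argument applied to the binary case: derivatives in $\vx$ act only on $f_1$, derivatives in $\vy$ only on $f_n$, and $\partial_t$ acts on $f_1$ together with the upper limit (which, by absolute convergence of the remaining integral, contributes a continuous boundary term). Derivatives of the $f_j$ are themselves smooth and still polynomially-times-Gaussian bounded, providing uniform dominating functions on compacts of $(0,\infty)\times\R^d\times\R^d$. The main obstacle, although routine in character, is purely a piece of bookkeeping: one must verify that the Gaussian prefactor $(s(t-s)/t)^{d/2}$ conspires exactly with the singular weights $(t-s)^{-\frac{d+2}{2}+\kappa}s^{-\frac{d+2}{2}+\kappa'}$ to turn the $s$-integral into a convergent Beta integral whose power of $t$ combines with the leftover $t^{-d/2}$ from the Gaussian prefactor to reproduce the correct exponent $-\frac{d+2}{2}+\kappa+\kappa'$.
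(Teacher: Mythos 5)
Your estimate is correct and rests on the same two ingredients as the paper's proof: completing the square in the Gaussian convolution to extract the factor $\ue^{-\frac{\alpha}{t}(\vx-\vy)^2}$ together with the prefactor $(s(t-s)/t)^{d/2}$, and then counting powers of $t$ in the remaining time integral. The only real difference is organizational: you induct on $n$ and prove only the binary Gaussian identity, whereas the paper evaluates the $(n-1)$-fold chain of Gaussian integrals in one step and leaves the nested $s$-integral as an unnamed constant. Your version is arguably the tidier one --- it gives the explicit constant $D_n = C_1\cdots C_n\,(\pi/\alpha)^{(n-1)d/2}\prod_{j=2}^n B(\kappa_1+\dots+\kappa_{j-1},\kappa_j)$, which immediately recovers the $1/(n-1)!$ decay that the paper extracts separately in Lemma~\ref{Lem:R_0^nbound}.

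One caveat to be aware of, though it is shared with the paper rather than peculiar to your argument: the smoothness claim is not actually a consequence of the stated hypothesis. You invoke that derivatives of the $f_j$ are again ``polynomially-times-Gaussian bounded,'' but the lemma only bounds $|f_j|$, not its derivatives, so the dominating-function step has no support. Moreover, the Leibniz boundary term from $\partial_t\int_0^t$ involves $f_1(0^+,\vx,\vz)$, which need not exist (or be finite) under the bound $|f_1|\le C_1 t^{-\frac{d+2}{2}+\kappa_1}$ when $\kappa_1\le\tfrac{d+2}{2}$; the clean way to sidestep this is the substitution $s=t\sigma$ (or splitting $\int_0^{t/2}+\int_{t/2}^t$), after which no boundary term appears. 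In the intended application $f_j = R_N$, which does extend smoothly to $t=0$ with all the needed derivative bounds, so the conclusion is safe; but as a self-contained lemma it quietly assumes more regularity of the $f_j$ than it states.
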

If the constants $\kappa_j$ are chosen optimally, they can be defined as
a {\it degree} of $f_j$. The unusual definition of the power of $t$ in 
\eqref{f_jbound} therefore leads to an additivity of the degree under 
convolution. From Lemma~\ref{lem:R_Nestimate} one concludes that $R_N$ has 
degree $N+2$.
\begin{proof}
The bounds \eqref{f_jbound} imply that the convolution integrals
defining $f_1*\dots *f_n$ converge absolutely and uniformly, hence
$f_1*\dots *f_n\in C^\infty\bigl((0,\infty)\times\rz^d\times\rz^d\bigr)$.

Moreover, the $n$-fold convolution of the kernels $f_j$, 
\begin{equation}
\begin{split}
 (f_1*\dots *f_n) (t,\vx,\vy) 
  &=  \int_0^t\dots\int_0^{t_{n-2}}\int_{\rz^d}\dots\int_{\rz^d}
         f_1(t-t_1,\vx,\vz_1)f_2(t_1-t_2,\vz_1,\vz_2)\dots  \\
  &\qquad\qquad\dots f_n(t_{n-1},\vz_{n-1},\vy)
        \ \ud z_{n-1}\dots\ud z_1\,\ud t_{n-1}\dots\ud t_1 \ ,
\end{split}
\end{equation}
will be estimated based on the bounds \eqref{f_jbound}. This 
includes a convolution of Gaussians, 
\begin{equation}
\begin{split}
 &\int_{\rz^d}\dots\int_{\rz^d}\ue^{-\alpha\bigl[\frac{(\vx-\vz_1)^2}{t-t_1} 
    +\frac{(\vz_1-\vz_2)^2}{t_1-t_2}+\dots
    +\frac{(\vz_{n-2}-\vz_{n-1})^2}{t_{n-2}-t_{n-1}}
    +\frac{(\vz_{n-1}-\vy)^2}{t_{n-1}}\bigr]} \ \ud z_{n-1}\dots\ud z_1 \\
  &\qquad\qquad\qquad\qquad= \Bigl(\frac{\pi}{\alpha}\Bigr)^{(n-1)d/2}\,
   \left(\frac{(t-t_1)\dots(t_{n-2}-t_{n-1})t_{n-1}}{t}\right)^{d/2}\,
   \ue^{-\frac{\alpha}{t}(\vx-\vy)^2}\ .
\end{split}
\end{equation}
Hence, with $\widetilde{D}_n = (\pi/\alpha)^{(n-1)d/2} \, C_1 C_2
\cdots C_n$ we obtain
\begin{equation}
\begin{split}
 &|(f_1*\dots *f_n)(t,\vx,\vy)| \\
 &\qquad\leq \widetilde{D}_n \,\ue^{-\frac{\alpha}{t}(\vx-\vy)^2}\ \\
  &\qquad\qquad\int_0^t\dots\int_0^{t_{n-2}}
        \frac{(t-t_1)^{\kappa_1-1}\dots(t_{n-2}-t_{n-1})^{\kappa_{n-1}-1}
        t_{n-1}^{\kappa_n-1}}{t^{d/2}}\ \ud t_{n-1}\dots\ud t_1 \\
  &\qquad=  \widetilde{D}_n \,\ue^{-\frac{\alpha}{t}(\vx-\vy)^2}\ 
        t^{-\frac{d+2}{2}+\kappa_1+\dots +\kappa_n}\\ 
  &\qquad\qquad\int_0^1\dots\int_0^{s_{n-2}}(1-s_1)^{\kappa_1-1}\dots
        (s_{n-2}-s_{n-1})^{\kappa_{n-1}-1}s_{n-1}^{\kappa_n-1}\ \ud s_{n-1}
        \dots\ud s_1 \ ,
\end{split}
\end{equation}
which proves the estimate \eqref{f*nest}.
\end{proof}
From this result one obtains the following.
\begin{lemma}
\label{Lem:R_0^nbound}
Let $n\in\nz$, then 
$R_N^{\ast n}\in C^\infty\bigl((0,\infty)\times\rz^d\times\rz^d\bigr)$.
Moreover,
\begin{equation}
\label{R_0^nest}
 |R_N^{\ast n}(t,\vx,\vy)| \leq 
 \Bigl(\frac{4\pi}{1-\varepsilon}\Bigr)^{(n-1)d/2}\,\frac{M_N^n}{(n-1)!}\,
 t^{-\frac{d+2}{2}+n(N+2)}\,\ue^{-\frac{1-\varepsilon}{4t}(\vx-\vy)^2}\ .
\end{equation}
In particular, $R_N^{\ast n}$ has degree $n(N+2)$ and therefore is bounded as 
$t\to 0$ when $n(N+2)\geq\frac{d+2}{2}$.
\end{lemma}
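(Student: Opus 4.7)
The plan is to derive \eqref{R_0^nest} by a direct application of Lemma~\ref{Lem:convolest} with $f_1=\cdots=f_n=R_N$, keeping track of the constants explicitly. The first step is to cast the estimate \eqref{R_0est} of Lemma~\ref{lem:R_Nestimate} into the form \eqref{f_jbound}: matching $-(d+2)/2+\kappa=N+1-d/2$ forces $\kappa_j=N+2$, while $\alpha=(1-\varepsilon)/4$ and $C_j=M_N$. Lemma~\ref{Lem:convolest} then immediately yields the smoothness of $R_N^{\ast n}$ and a bound of the shape \eqref{R_0^nest}, but with an unspecified constant; the remaining task is to identify that constant as $(4\pi/(1-\varepsilon))^{(n-1)d/2}\,M_N^n/(n-1)!$.

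To do so, I would revisit the proof of Lemma~\ref{Lem:convolest} with these specific values. The spatial Gaussian integrations provide the factor $(\pi/\alpha)^{(n-1)d/2}=(4\pi/(1-\varepsilon))^{(n-1)d/2}$, and the $n$ pointwise bounds contribute $M_N^n$. What remains is the iterated time integral
\begin{equation*}
 J_n := \int_0^1\cdots\int_0^{s_{n-2}}(1-s_1)^{N+1}(s_1-s_2)^{N+1}\cdots s_{n-1}^{N+1}\,\ud s_{n-1}\cdots\ud s_1 \, .
\end{equation*}

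The key point for extracting the $1/(n-1)!$ is that the hypothesis $N\geq d/2-1\geq 0$ makes the exponents $\kappa_j-1=N+1$ non-negative. On the ordered simplex $\{0\leq s_{n-1}\leq\cdots\leq s_1\leq 1\}$ each of the $n$ differences $(1-s_1),(s_1-s_2),\ldots,s_{n-1}$ lies in $[0,1]$, so the integrand in $J_n$ is bounded by $1$. Consequently $J_n$ is dominated by the volume of the simplex, which equals $1/(n-1)!$. Multiplying the three contributions together yields \eqref{R_0^nest}.

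The remaining assertions then follow at once: the ``degree $n(N+2)$'' is read off from the exponent of $t$ in \eqref{R_0^nest}, and boundedness as $t\to 0^+$ whenever $n(N+2)\geq(d+2)/2$ follows because the Gaussian factor is always $\leq 1$. The only real obstacle is the bookkeeping inside the proof of Lemma~\ref{Lem:convolest}; there is no new analytic ingredient beyond Lemmas~\ref{lem:R_Nestimate} and~\ref{Lem:convolest}. The fact that the estimate is sharp enough to control a Volterra series — i.e.\ the appearance of the factorial $1/(n-1)!$ rather than just a constant — is precisely what rests on the crude yet decisive bound ``integrand $\leq 1$'' on the simplex.
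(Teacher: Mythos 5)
Your proposal is correct and follows the paper's proof essentially verbatim: you invoke Lemma~\ref{Lem:convolest} with all $f_j=R_N$ (so $\kappa_j=N+2$, $\alpha=(1-\varepsilon)/4$, $C_j=M_N$), and refine the constant by bounding the iterated simplex integral by the simplex volume $1/(n-1)!$ via ``integrand $\leq 1$''. The only cosmetic difference is that you justify the non-negativity of the exponent $N+1$ through $N\geq d/2-1$, whereas $N\in\nz_0$ alone already suffices.
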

\begin{proof}
With the bound  \eqref{R_0est} this statement follows almost immediately 
from Lemma~\ref{Lem:convolest}. We only need to refine the constant
$D_n$ (in terms of $n$), which includes the contribution
\begin{equation}
\begin{split}
 &\int_0^1\dots\int_0^{s_{n-2}}(1-s_1)^{N+1}\dots(s_{n-2}-s_{n-1})^{N+1}s_{n-1}^{N+1}\ 
       \ud s_{n-1}\dots\ud s_1 \\
  &\qquad\qquad\qquad\qquad\qquad\leq  
       \int_0^1\dots\int_0^{s_{n-2}} \ud s_{n-1}\dots\ud s_1 =\frac{1}{(n-1)!}\ .
\end{split}
\end{equation}
\end{proof}
We also need the following standard result, which is an appropriate version
of Duhamel's principle.
\begin{lemma}
\label{Lem:Duhamel}
Let $f\in C^\infty\bigl((0,\infty)\times\rz^d\times\rz^d\bigr)$, and
$k_N$ be a heat parametrix. Then
\begin{equation}
 \Bigl(\frac{\partial}{\partial t} + H\Bigr)\bigl(k_N *f\bigr) =
 f + R_N *f\ .
\end{equation}
\end{lemma}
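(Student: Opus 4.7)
The plan is to expand the convolution
\begin{equation*}
(k_N * f)(t,\vx,\vy) = \int_0^t \int_{\rz^d} k_N(t-s,\vx,\vz)\, f(s,\vz,\vy)\, \ud z\, \ud s
\end{equation*}
and apply $\partial_t + H$ by differentiating under the integral sign, carefully separating the contribution from the moving upper limit of integration. Since $H$ is a second-order differential operator in $\vx$ with smooth, polynomially bounded coefficients and $k_N$ has Gaussian decay in $\vx-\vz$, I expect it to commute with the $s$- and $\vz$-integrals, producing $(H k_N) * f$. Leibniz's rule in $t$ then yields an interior piece $(\partial_t k_N)*f$ plus a boundary term
\begin{equation*}
B(t,\vx,\vy) := \lim_{\tau\to 0^+}\int_{\rz^d} k_N(\tau,\vx,\vz)\, f(t-\tau,\vz,\vy)\, \ud z .
\end{equation*}
Summing the three contributions and invoking $R_N=(\partial_t+H)k_N$, the claim reduces to showing $B = f$.

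I would verify $B=f$ by the standard approximate-identity argument: substitute $\vz=\vx+\sqrt{\tau}\,\vw$ in the explicit form \eqref{paraNdef}. The Jacobian $\tau^{d/2}$ cancels the prefactor $(4\pi\tau)^{-d/2}$, the Gaussian becomes $(4\pi)^{-d/2}\ue^{-\vw^2/4}$, and the amplitude $\sum_{k=0}^{N+1} u_k(\vx,\vx+\sqrt{\tau}\,\vw)\,\tau^k$ converges pointwise to $u_0(\vx,\vx) = 1$ by the normalisation of $u_0$ built into Theorem~\ref{prop:Para}. Dominated convergence, using the polynomial bounds on the $u_k$ obtained in the construction of the parametrix in Section~\ref{sec:Para} together with the smoothness and polynomial growth of $f$ to furnish a Gaussian-integrable majorant, then yields $B(t,\vx,\vy)=f(t,\vx,\vy)$.

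The main technical obstacle will be justifying all the interchanges uniformly near $s=t$, where both $\partial_t k_N$ and $H k_N$ develop a $(t-s)^{-1}$-type singularity arising from differentiation of the Gaussian, before being integrated in $s$. The Gaussian-weighted estimates underlying Lemma~\ref{Lem:convolest} (i.e., the convolution bound for kernels with degrees in the sense introduced after that lemma) are precisely what is needed to dominate these integrands on $(0,t-\delta)\times\rz^d$ uniformly in $\delta>0$ and to pass to the limit $\delta\to 0^+$. Once these dominated-convergence arguments are in place, the three contributions assemble into $(\partial_t+H)(k_N*f) = f + R_N*f$, as claimed.
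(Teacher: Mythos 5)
Your overall skeleton is the right one — truncate the $s$-integral at $t-\eps$, differentiate, peel off a boundary term, identify the boundary term with $f$ via an approximate-identity argument, and let $\eps\to0$ — and the substitution $\vz=\vx+\sqrt{\tau}\,\vw$ for the boundary term is exactly what one should do. The paper itself only says ``direct computation,'' and your sketch of $B=f$ is a perfectly sound way to make this precise.

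However, there is a genuine gap in your handling of the interior terms: the pieces $(\partial_t k_N)*f$ and $(Hk_N)*f$ do \emph{not} converge absolutely, and Lemma~\ref{Lem:convolest} cannot be used to dominate either of them. In the language of degrees introduced after that lemma, $k_N$ has degree $1$, and each $t$-derivative (or each application of $\Delta$) lowers the degree by $2$, so $\partial_t k_N$ and $H k_N$ both have degree $-1$. The hypothesis of Lemma~\ref{Lem:convolest} requires $\kappa_j>0$; with $\kappa=-1$ the convolution integral produces a factor $(t-s)^{-2}$ after integrating out $\vz$, which is not integrable at $s=t$. So there is no majorant uniform in $\delta$, and ``dominated convergence on $(0,t-\delta)$ and pass to $\delta\to0^+$'' does not go through for each piece separately. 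The singularities of $\partial_t k_N$ and $Hk_N$ cancel only in the combination $R_N=(\partial_t+H)k_N$, which has degree $N+2>0$ (Lemma~\ref{lem:R_Nestimate}) and whose convolution with $f$ is well behaved.

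The fix is small but structural: do not split $\partial_t$ from $H$ before the limit. Apply the whole operator $\partial_t+H$ to the truncated integral
\begin{equation*}
\int_0^{t-\eps}\!\!\int_{\rz^d} k_N(t-s,\vx,\vz)\,f(s,\vz,\vy)\,\ud z\,\ud s\ ,
\end{equation*}
where differentiation under the integral sign is unproblematic because $t-s\geq\eps>0$ keeps $k_N$ away from its singularity. This yields
\begin{equation*}
\int_{\rz^d} k_N(\eps,\vx,\vz)\,f(t-\eps,\vz,\vy)\,\ud z \ +\ \int_0^{t-\eps}\!\!\int_{\rz^d} R_N(t-s,\vx,\vz)\,f(s,\vz,\vy)\,\ud z\,\ud s\ ,
\end{equation*}
with $R_N$ appearing \emph{before} any limit is taken. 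Now the second term has a nonnegative, integrable bound by Lemma~\ref{lem:R_Nestimate} and Lemma~\ref{Lem:convolest}, so $\eps\to0^+$ gives $R_N*f$; the first term is your $B$, and your approximate-identity argument gives $f$. Recombining $\partial_t$ and $H$ into $R_N$ must happen inside the truncated integral, not after separately passing three divergent pieces to the limit.
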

One proves this statement by a direct computation. Following a standard
procedure, Duhamel's principle allows us to construct a heat kernel in
terms of a Volterra series.
\begin{theorem}
\label{thm:Volterra}
Let $N\geq d/2 -1$ so that $k_N$ is a heat parametrix. Then the Volterra 
series
\begin{equation}
\label{Voltseries}
 K := k_N + \sum_{n=1}^\infty (-1)^n k_N * R_N^{\ast n}
\end{equation}
converges in $C^\infty\bigl((0,\infty)\times\rz^d\times\rz^d\bigr)$.
It has degree one and defines a heat kernel for the magnetic Schr\"odinger 
operator $H$. 

Moreover, the series \eqref{Voltseries} provides an asymptotic expansion,
as $t\to 0$, for the heat kernel in such a way that the parametrix $k_N$
contributes the first $N+1$ terms in that expansion.
\end{theorem}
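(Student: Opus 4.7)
The plan is to verify each of the four claims in the theorem---convergence in $C^\infty$, degree one, the heat-equation identity with delta initial condition, and the small-$t$ expansion---using the estimates already assembled in Lemmas \ref{lem:R_Nestimate}--\ref{Lem:Duhamel}.

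First I would verify convergence. The parametrix $k_N$ itself satisfies a bound of the form (\ref{f_jbound}) with degree $\kappa=1$: its leading behaviour is $t^{-d/2}=t^{-(d+2)/2+1}$, and the polynomial growth of the $u_k$ in $|\vx-\vy|$ is absorbed into the Gaussian at the cost of shrinking $\alpha=1/4$ to $(1-\varepsilon)/4$. Combined with (\ref{R_0^nest}), a further application of Lemma \ref{Lem:convolest} yields
\[
 |k_N * R_N^{\ast n}(t,\vx,\vy)|
 \leq \frac{\widetilde{C}^{\,n}}{(n-1)!}\,t^{-(d+2)/2+1+n(N+2)}\,\ue^{-\frac{1-2\varepsilon}{4t}(\vx-\vy)^2}.
\]
The factorial decay in $n$ gives absolute and uniform convergence of the Volterra series on compact subsets of $(0,\infty)\times\rz^d\times\rz^d$; the same estimate applied to arbitrary $(t,\vx,\vy)$-derivatives (each of which introduces only additional negative powers of $t$ and polynomial factors controlled by the Gaussian) upgrades this to $C^\infty$-convergence. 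Since $k_N$ has degree one and each correction strictly greater degree, $K$ itself has degree one.

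For the heat-equation identity I would apply $\partial_t+H$ termwise, legitimate by the $C^\infty$-convergence. Using $(\partial_t+H)k_N=R_N$ and Lemma \ref{Lem:Duhamel},
\[
 (\partial_t+H)(k_N*R_N^{\ast n}) = R_N^{\ast n} + R_N^{\ast(n+1)},
\]
so that after reindexing the second copy the series telescopes to
\[
 (\partial_t+H)K = R_N + \sum_{n=1}^\infty (-1)^n\bigl(R_N^{\ast n}+R_N^{\ast(n+1)}\bigr) = 0.
\]
The initial condition $K(t,\cdot,\vy)\to\delta_\vy$ as $t\to 0^+$ follows from property (iii) of the parametrix together with the fact that for $n\geq 1$ each correction $k_N*R_N^{\ast n}$ has degree $\geq 1+(N+2)\geq d/2+2$, hence is bounded by $Ct$ times a Gaussian and vanishes as a distribution.

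Finally, the $n=1$ term of the bound above applied to the remainder $K-k_N=\sum_{n\geq 1}(-1)^n\, k_N*R_N^{\ast n}$ gives $|K-k_N|\leq C\,t^{N+2-d/2}\,\ue^{-(1-2\varepsilon)(\vx-\vy)^2/(4t)}$, which is of strictly smaller order than the highest explicit term $u_{N+1}(\vx,\vy)\,t^{N+1}(4\pi t)^{-d/2}$ in (\ref{paraNdef}); reading off (\ref{paraNdef}) then identifies $k_N$ with the first $N+1$ terms of the small-$t$ expansion of $K$. The main obstacle is the bookkeeping needed to upgrade pointwise bounds to the $C^\infty$-convergence used to justify termwise differentiation, together with the translation between the abstract \emph{degree} notion of Lemma \ref{Lem:convolest} and the concrete orders of $t$ in the small-$t$ asymptotics; once these are in place, telescoping and the degree estimates do the rest.
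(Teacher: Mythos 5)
Your proof is correct and follows essentially the same route as the paper: estimate $k_N$ as a degree-one kernel, apply the convolution lemma to get factorial decay of $|k_N * R_N^{\ast n}|$, use Duhamel's principle and telescoping for the heat-equation identity, read off the initial condition and the asymptotic order of the remainder from the degree bounds. The only cosmetic difference is in the constants — the paper sharpens $1/(n-1)!$ to $1/n!$ by reapplying the simplex-integral refinement when composing $k_N$ with $R_N^{\ast n}$, and keeps the Gaussian rate $(1-\varepsilon)/4$ rather than $(1-2\varepsilon)/4$ (convolution of Gaussians of common rate $\alpha$ preserves $\alpha$, so no further loss is needed) — but neither affects the conclusion.
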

\begin{proof}
According to Lemma~\ref{Lem:R_0^nbound}, $R_N^{\ast n}$ is bounded as
$t\to 0$, if $n\geq\frac{d+2}{2N+4}$. Moreover, in analogy to 
Lemma~\ref{lem:R_Nestimate} one finds that
\begin{equation}
 |k_N (t,\vx,\vy)| \leq C_N\,t^{-d/2}\,\ue^{-\frac{1-\varepsilon}{4t}(\vx-\vy)^2}\ ,
\end{equation}
i.e., the degree of $k_N$ is $1$ (independent of $N$). The composition 
Lemma~\ref{Lem:convolest} then implies the bound
\begin{equation}
\label{eq:kNrNnest}
 |k_N\ast R_N^{\ast n}(t,\vx,\vy)| \leq 
 \Bigl(\frac{4\pi}{1-\varepsilon}\Bigr)^{nd/2}\,\frac{C_N M_N^n}{n!}\,
 t^{-\frac{d+2}{2}+n(N+2)+1}\,\ue^{-\frac{1-\varepsilon}{4t}(\vx-\vy)^2}\ .
\end{equation}
Choosing $n\geq\frac{d+2}{2N+4}$ as above, this bound ensures uniform 
convergence (in $\vx,\vy$ and $t\in(0,T)$) of the series. 

The degree of $K$ is determined by $k_N$ to be one; all other terms in the
Volterra series have a higher degree.

Due to the Gaussian factors, any derivative with respect to components of 
$\vx,\vy$ reduces the degree by one, whereas every $t$-derivative reduces 
the degree by two. Hence, for every derivative of $K$ there exists $l\in\nz_0$
such that choosing $n\geq\frac{d+2}{2N+4}+l$ ensures uniform convergence of
the series.
 
As from Lemma~\ref{Lem:Duhamel} we have that 
\begin{equation}
 \Bigl(\frac{\partial}{\partial t}+H\Bigr)k_N * R_N^{\ast n} =
 R_N^{\ast n}+R_N^{\ast(n+1)}\ ,
\end{equation}
the uniform convergence of the Volterra series and its derivatives
implies that after applying $\frac{\partial}{\partial t}+H$ the series
telescopes.  Therefore, $(\frac{\partial}{\partial t}+H)K=0$. The
initial condition as $t\to 0$ was built into the construction of
$k_N$; all other terms give no contribution to this limit.

The series \eqref{Voltseries} is asymptotic for small $t$ in the sense
that $k_N(t,\vx,\vy) $ is a sum of terms with a $t$-dependence (on the 
diagonal $\vx=\vy$) of the form $t^{-\frac{d}{2}+k}$, where $k=0,\dots,N+1$.
The highest power, i.e., the smallest term as $t\to 0$, therefore is
$t^{-\frac{d}{2}+N+1}$. According to \eqref{eq:kNrNnest}, however, the smallest 
power in the remainder, coming from $k_N\ast R_N$, is $t^{-\frac{d}{2}+N+2}$. 
\end{proof}

Informally, Theorem \ref{thm:Volterra}  can be rephrased as
\begin{equation}
 K(t,\vx,\vy) = \frac{1}{(4\pi t)^{d/2}}\,\ue^{-\frac{1}{4t}(\vx-\vy)^2} 
 \sum_{k=0}^{N+1} u_k(\vx,\vy)\,t^k +O(t^{-\frac{d}{2}+N+2}) \ .
\end{equation}
A comparison with \eqref{eq:heatdiagexp} therefore shows that the
heat trace invariants are given by
\begin{equation}
\label{eq:heatinv}
 a_j(\vx) = u_j(\vx,\vx) \ ,
\end{equation}
cf.\ \eqref{eq:loworders} for the lowest orders.

\section{Some applications}
\label{sec:appl}
One can use the knowledge of a complete asymptotic expansion of the
heat kernel for a magnetic Schr\"odinger operator on $\rz^d$ to, e.g.,
compute heat invariants as well as the heat trace asymptotics for
magnetic Schr\"odinger operators on manifolds (with or without
boundary) of the form $\rz^2/\Gamma$, where $\Gamma$ is a discrete
subgroup of the isometry group of $\rz^2$. These include half-planes,
cylinders and tori.  Below we give some examples in dimension
$d=2$. In that case there is only one non-vanishing component
\begin{equation}
\label{eq:2dmagnet}
 B(\vx) = B_{21}(\vx)=-B_{12}(\vx)
\end{equation}
of the field strengths \eqref{eq:Bcompdef}. This is the magnetic field in
$d=2$.
\subsection{Half-plane}
We consider the heat equation for a magnetic Schr\"odinger operator 
\eqref{magSchroe} in the upper half-plane
\begin{equation}
 H_+ := \{ \vx=(x_1,x_2)\in\rz^2 ;\ x_2 >0\}
\end{equation}
with Dirichlet- or Neumann boundary conditions along the boundary.

Following a well-known construction, the heat kernels $K^-(t,\vx,\vy)$
and $K^+(t,\vx,\vy)$ for the upper half-plane with Dirichlet- and
Neumann boundary conditions, respectively, can be constructed from the
corresponding kernel $K(t,\vx,\vy)$ for the entire plane $\rz^2$,
provided the potentials have been extended suitably to the lower
half-plane. Introducing the reflection $R(x_1,x_2)=(x_1,-x_2)$ on
$\rz^2$ across the boundary, we need potentials on $\rz^2$ that
satisfy
\begin{equation}
\label{eq:potentialreflect}
 V(R\vx)=V(\vx)\qquad\text{and}\qquad \vA(R\vx)=R\vA(\vx)\ .
\end{equation}
The latter condition implies that $B(R\vx)=-B(\vx)$. Intuitively, this
ensures that a particle which is reflected at the boundary with reversed
velocity, is subject to a magnetic field with the correct sign.
The conditions \eqref{eq:potentialreflect} can be achieved by  extending 
the potentials from $H_+$ to $\rz^2$ accordingly. In order for the potentials 
to be smooth across the boundary, at $x_2=0$ the odd partial 
$x_2$-derivatives of $V$ and $A_1$ as well as the even partial 
$x_2$-derivatives of $A_2$ must vanish. 

Under these conditions the heat kernel $K(t,\vx,\vy)$ for the extended
problem on $\rz^2$ exists and Theorem \ref{prop:Para} provides heat
parametrices. Then
\begin{equation}
 K^{\mp}(t,\vx,\vy) = K(t,\vx,\vy) \mp K(t,\vx,R\vy)
\end{equation}
are heat kernels for the Dirichlet- or Neumann problem on $H_+$,
respectively. The same construction applies to heat parametrices. 
Their diagonals are
\begin{equation}
\label{eq:DNheatdiag}
 k^\mp_N(t,\vx,\vx) = \frac{1}{4\pi t}\sum_{k=0}^{N+1} 
 \bigl(u_k(\vx,\vx) \mp\ue^{-\frac{x_2^2}{t}}\,u_k(\vx,R\vx)\bigr)t^k \ .
\end{equation}
Hence, following \eqref{eq:heatinv} the heat trace invariants are 
given by
\begin{equation}
\label{heat_invariants_half-plane}
 a_k^{\mp}(\vx) = u_k(\vx,\vx) = a_k(\vx)\ .
\end{equation}
One notices that, due to the second term on the right-hand side of
\eqref{eq:DNheatdiag} the knowledge of heat trace invariants for the
extended problem on $\rz^2$ is not sufficient to construct heat trace
invariants on $H_+$; one rather requires the coefficients of a
parametrix. The resulting heat invariants
\eqref{heat_invariants_half-plane} for the half-plane, however, are
the same as for the entire plane. The reflection in the boundary
produces contributions that are exponentially small and, therefore,
are not visible in the heat trace invariants. This is a manifestation
of Kac' {\it principle of not feeling the boundary} \cite{Kac66}.

It may still be of interest to calculate the first terms of 
\eqref{eq:DNheatdiag} from \eqref{eq:0solution} and \eqref{u_1result}, 
e.g.,
\begin{equation}
 u_0(\vx,\vx) \mp\ue^{-\frac{x_2^2}{t}}\,u_0(\vx,R\vx)
  = 1\mp\exp\Bigl\{-\frac{x_2^2}{t}+
    2\ui x_2 \int_0^1 A_2(x_1,(2s-1)x_2)\ \ud s\Bigr\}\ .
\end{equation}
\subsection{Cylinder}
An infinite cylinder can be realised as the plane $\rz^2$ modulo the
discrete subgroup $\Gamma\cong\gz$ of the euclidean group that is 
generated by the translation $(x_1,x_2)\mapsto (x_1,x_2 +1)$. Functions 
on $\rz^2/\Gamma$ are then functions on $\rz^2$ with 
$f(x_1,x_2)=f(x_1,x_2 +1)$. A scalar potential on the cylinder can be 
extended periodically to $\rz^2$ such that 
$V(x_1,x_2)=V(x_1,x_2 +1)$. The same can be done for the magnetic 
field \eqref{eq:2dmagnet}. The vector potential, however, has to be 
chosen in a particular gauge to allow its components to be periodic, too.
In a Fourier representation of the magnetic field,
\begin{equation}
 B(x_1,x_2) = \sum_{n\in\gz} B_n (x_1)\,\ue^{2\pi\ui nx_2} \ ,
\end{equation}
and of the vector potential
\begin{equation}
 \vA(x_1,x_2) = \sum_{n\in\gz} \vA_n (x_1)\,\ue^{2\pi\ui nx_2} \ ,
\end{equation}
one has to ensure that
\begin{equation}
 B_n (x_1) = 
 \frac{\partial A_{n,2}}{\partial x_1}(x_1) - 2\pi\ui n A_{n,1}(x_1)\ ,
\end{equation}
which is always possible. The magnetic Schr\"odinger operator 
\eqref{magSchroe} with these potentials can then be defined on the 
domain $C_0^\infty(\rz^2/\Gamma)$ on which it is essentially self-adjoint.

The heat kernel for such a magnetic Schr\"odinger operator can be
constructed from the respective kernel for the periodically extended 
problem on $\rz^2$ via
\begin{equation}
 K^\mathrm{cyl}(t,\vx,\vy) = \sum_{n\in\gz} K(t,\vx,(y_1,y_2+n))\ .
\end{equation}
In the same way a heat parametrix results from the respective
parametrix of the periodically extended problem in the plane. Its
diagonal then is
\begin{equation}
\label{eq:cylheatdiag}
 k^\mathrm{cyl}_N(t,\vx,\vx) = \frac{1}{4\pi t}\sum_{k=0}^{N+1} 
 \sum_{n\in\gz} \ue^{-\frac{n^2}{4t}}\, u_k(\vx,(x_1,x_2 +n))\,t^k \ .
\end{equation}
Hence the heat trace invariants for the cylinder are
\begin{equation}
 a_k^\mathrm{cyl}(\vx) = u_k(\vx,\vx) = a_k(\vx)\ ,
\end{equation}
since, again, the corrections coming from summing over the
translations in $\Gamma$ are exponentially small.

The first term in \eqref{eq:cylheatdiag} can be calculated to
give
\begin{equation}
\begin{split}
 \sum_{n\in\gz} \ue^{-\frac{n^2}{4t}}\, u_0(\vx,(x_1,x_2 +n))
  &= \sum_{n\in\gz}\exp\Bigl\{-\frac{n^2}{4t}-\ui n\int_0^1
        A_2(x_1,x_2 +n(1-s))\ \ud s\Bigr\} \\
  &= 1 + \sum_{n\in\gz\backslash\{0\}}\exp\Bigl\{-\frac{n^2}{4t}-
         \ui n\int_0^1 A_2(x_1,x_2 +n(1-s))\ \ud s\Bigr\} \ ,
\end{split}
\end{equation}
thus making the exponentially small correction explicit.
\subsection{Torus}
A torus can be represented as $\rz^2/\Gamma$ in analogy to the
cylinder above, however, with a discrete subgroup $\Gamma\cong\gz^2$ 
of the euclidean group that is generated by the two translations
$(x_1,x_2)\mapsto (x_1,x_2 +1)$ and $(x_1,x_2)\mapsto (x_1 +1,x_2)$.
Functions on $\rz^2/\Gamma$ can be identified with functions on
$\rz^2$ via $f(\vx)=f(\vx+\vn)$ for all $\vn\in\gz^2$. The scalar potential
and the magnetic field can be chosen periodically on $\rz^2$, the
vector potential, however, does not allow such a
choice. A Fourier representation of the magnetic field,
\begin{equation}
 B(\vx) = \sum_{\vn\in\gz} B_{\vn}\,\ue^{2\pi\ui\vn\cdot\vx} \ ,
\end{equation}
and of the vector potential
\begin{equation}
 \vA^\mathrm{per}(\vx) 
 = \sum_{\vn\in\gz} \vA_{\vn} \,\ue^{2\pi\ui\vn\cdot\vx} \ ,
\end{equation}
shows that the constant term $B_0$ (which equals the flux $\Phi$
through the torus) cannot be represented in terms of the periodic
vector potential.  It requires an additional linear
contribution $\vA^0(\vx)$ that generates $B_0$, hence a constant
magnetic field on $\rz^2$. This is related to the flux quantisation on
the torus, i.e., the fact that $B_0=\Phi=2\pi n$, where $n\in\gz$.

The construction of the heat kernel for the magnetic Schr\"odinger
operator, the heat parametrix and the heat invariants is closely analogous
to the case of the cylinder, i.e.,
\begin{equation}
 K^\mathrm{torus}(t,\vx,\vy) = \sum_{\vn\in\gz^2} K(t,\vx,\vy+\vn)\ ,
\end{equation}
leading to the diagonals of parametrices
\begin{equation}
\label{eq:torusheatdiag}
 k_N^\mathrm{torus}(t,\vx,\vx) = \frac{1}{4\pi t}\sum_{k=0}^{N+1}
 \sum_{\vn\in\gz^2}\ue^{-\frac{\vn^2}{4t}}\,u_k(\vx,\vx+\vn)\,t^k \ .
\end{equation}
The heat trace invariants for the torus, therefore, are
\begin{equation}
 a_k^\mathrm{torus}(\vx) = u_k(\vx,\vx) = a_k(\vx)\ ,
\end{equation}
since, as for the cylinder the corrections coming from summing over 
the translations in $\Gamma$ are exponentially small. Explicitly, the
first term in \eqref{eq:torusheatdiag} is
\begin{equation}
\begin{split}
 \sum_{\vn\in\gz^2}\ue^{-\frac{\vn^2}{4t}}\,u_0(\vx,\vx+\vn)
  &=  \sum_{\vn\in\gz^2}\exp\Bigl\{-\frac{\vn^2}{4t}-\ui\int_0^1
         \vn\cdot\vA(\vx+(1-s)\vn)\ \ud s\Bigr\} \\
  &= 1 + \sum_{\vn\in\gz^2\backslash\{0\}}\exp\Bigl\{-\frac{\vn^2}{4t}-
         \ui\int_0^1\vn\cdot\vA(\vx+(1-s)\vn)\ \ud s\Bigr\} \ .
\end{split}
\end{equation}

\vspace*{0.5cm}
\subsection*{Acknowledgements}
We gratefully acknowledge financial support by a Royal Society
International Joint Project grant.

\vspace*{0.5cm}
{\small
\bibliographystyle{beta}
\bibliography{literatur}}

\begin{thebibliography}{BGM71}

\bibitem[ABP73]{AtiBotPat73}
M.~Atiyah, R.~Bott and V.~K. Patodi, {\em On the heat equation and the index
  theorem\/}, Invent. Math. {\bf 19} (1973) 279--330.

\bibitem[BGM71]{BerGauEdm71}
M.~Berger, P.~Gauduchon and E.~Mazet, Le spectre d'une vari\'et\'e
  riemannienne, Lecture Notes in Mathematics, Vol. 194 (Springer-Verlag,
  Berlin, 1971).

\bibitem[BGV92]{BerGetVer92}
N.~Berline, E.~Getzler and M.~Vergne, Heat kernels and {D}irac operators, vol.
  298 of {\em Grundlehren der Mathematischen Wissenschaften [Fundamental
  Principles of Mathematical Sciences]\/} (Springer-Verlag, Berlin, 1992).

\bibitem[CdV81]{Col81}
Y.~Colin~de Verdi{\`e}re, {\em Une formule de traces pour l'op\'erateur de
  {S}chr\"odinger dans {${\bf R}^{3}$}\/}, Ann. Sci. \'Ecole Norm. Sup. (4)
  {\bf 14} (1981) 27--39.

\bibitem[Erd97]{Erd97}
L.~Erd{\H{o}}s, {\em Dia- and paramagnetism for nonhomogeneous magnetic
  fields\/}, J. Math. Phys. {\bf 38} (1997) 1289--1317.

\bibitem[Gil95]{Gil95}
P.~B. Gilkey, Invariance theory, the heat equation, and the {A}tiyah-{S}inger
  index theorem, Studies in Advanced Mathematics, 2nd edn. (CRC Press, Boca
  Raton, FL, 1995).

\bibitem[Gri04]{Gri04}
D.~Grieser, {\em Notes on heat kernel asymptotics\/},  (2004),\\
  \url{www.staff.uni-oldenburg.de/daniel.grieser/wwwlehre/Schriebe/heat.pdf}.

\bibitem[Hit02]{Hit02}
M.~Hitrik, {\em Existence of resonances in magnetic scattering\/}, J. Comput.
  Appl. Math. {\bf 148} (2002) 91--97.

\bibitem[HP03]{HitPol03}
M.~Hitrik and I.~Polterovich, {\em Regularized traces and {T}aylor expansions
  for the heat semigroup\/}, J. London Math. Soc. (2) {\bf 68} (2003) 402--418.

\bibitem[Kac66]{Kac66}
M.~Kac, {\em Can one hear the shape of a drum?\/}, Amer. Math. Monthly {\bf 73}
  (1966) 1--23.

\bibitem[Kir01]{Kir01}
K.~Kirsten, Spectral functions in Mathematics and Physics (Chapman and
  Hall/CRC, Boca Raton, Flo., 2001).

\bibitem[KP03]{KorPus03}
E.~Korotyaev and A.~Pushnitski, {\em On the high-energy asymptotics of the
  integrated density of states\/}, Bull. London Math. Soc. {\bf 35} (2003)
  770--776.

\bibitem[LS81]{LeiSim81}
H.~Leinfelder and C.~G. Simader, {\em Schr\"odinger operators with singular
  magnetic vector potentials\/}, Math. Z. {\bf 176} (1981) 1--19.

\bibitem[LT97]{LosTha97}
M.~Loss and B.~Thaller, {\em Optimal heat kernel estimates for {S}chr\"odinger
  operators with magnetic fields in two dimensions\/}, Comm. Math. Phys. {\bf
  186} (1997) 95--107.

\bibitem[Min53]{Min53}
S.~Minakshisundaram, {\em Eigenfunctions on {R}iemannian manifolds\/}, J.
  Indian Math. Soc. (N.S.) {\bf 17} (1953) 159--165.

\bibitem[MP49]{MinPle49}
S.~Minakshisundaram and {\AA}.~Pleijel, {\em Some properties of the
  eigenfunctions of the {L}aplace-operator on {R}iemannian manifolds\/},
  Canadian J. Math. {\bf 1} (1949) 242--256.

\bibitem[MS67]{McKSin67}
H.~P. McKean, Jr. and I.~M. Singer, {\em Curvature and the eigenvalues of the
  {L}aplacian\/}, J. Differential Geometry {\bf 1} (1967) 43--69.

\bibitem[Sim73]{Sim73}
B.~Simon, {\em Schr\"odinger operators with singular magnetic vector
  potentials\/}, Math. Z. {\bf 131} (1973) 361--370.

\bibitem[Sim05]{Sim05}
B.~Simon, Functional integration and quantum physics, 2nd edn. (AMS Chelsea
  Publishing, Providence, RI, 2005).

\bibitem[Yos53]{Yos53}
K.~Yosida, {\em On the fundamental solution of the parabolic equation in a
  {R}iemannian space\/}, Osaka Math. J. {\bf 5} (1953) 65--74.

\end{thebibliography}

\end{document}